\newtheorem{thm}{Theorem}[section]
\newtheorem{lemma}[thm]{Lemma}
\newtheorem{prop}[thm]{Proposition}
\newtheorem{cor}[thm]{Corollary}
\newtheorem{definition}[thm]{Definition}
\begin{document}

\title{Persistent Homology Over Directed Acyclic Graphs}

\author{Erin Wolf Chambers \footnote{
    Department of Computer Science,
    Saint Louis University,
    \{echambe5,letscher\}@slu.edu.
    Research supported in part by the National Science Foundation under Grant No. CCF 1054779 and IIS-1319573.
  }
  \and David Letscher \footnotemark[1]
}

\date{}

\maketitle

\begin{abstract}
We define persistent homology groups over any set of spaces which have inclusions 
defined so that the corresponding directed graph between the spaces is acyclic, as well as along any
subgraph of this directed graph. 
This method simultaneously generalizes standard persistent homology, zigzag persistence and multidimensional persistence to arbitrary directed acyclic graphs, and it also allows the study of more general families of topological spaces or point-cloud data. 
We give an algorithm to compute the persistent homology groups simultaneously for all subgraphs which contain a single source and a single sink in $O(n^4)$ arithmetic operations, where $n$ is the number of vertices in the graph. 
We then  demonstrate as an application of these tools a method to overlay two distinct filtrations of the same underlying space, which allows us to detect the most significant barcodes using considerably fewer points than standard persistence.
\end{abstract}

\section{Introduction}
\label{sec:intro}

Since its introduction~\cite{persistence}, the concept of topological persistence  has found numerous 
applications in diverse areas such as surface reconstruction, sensor networks, bioinformatics, 
and cosmology.  
For completeness, we briefly survey some results from persistent homology with an 
emphasis on tools and techniques used in this paper, although a full coverage is beyond 
the scope of this paper.  See any of the recent books or surveys on topological 
persistence for full coverage of this broad topic and its applications~\cite{eh-phs-08,eh-cti-10,ghristoverview,z-tc-05,oudot2015persistence}.

At a high level, the various models of persistence all have a collection of spaces and inclusions of one space into another.
The aim is to find topological features common to subsets of these spaces.  In standard persistent homology~\cite{persistence},
the spaces are linearly ordered with one space included in the next.  In zigzag persistence\cite{zigzag}, the spaces are 
linearly ordered but the inclusions can occur in either direction.  
Multi-dimensional persistence~\cite{multid-persistence} works in multiple dimensions on a grid with inclusion maps 
parallel to the coordinate axes.  In this paper, we present a natural extension of each of these to a directed acyclic graph (or DAG) structure on the spaces and maps, which we call DAG persistence.
See Figure~\ref{fig:models} for an example of each of thesestructure.

Standard persistence consider spaces with maps of the form ${X}_1 \rightarrow {X}_2 \rightarrow \ldots \rightarrow {X}_n$, where each $X_i$ is a topological space, often represented as a simplicial complex.
These maps between the spaces induce maps between chain complexes which pass to homology as 
homomorphisms $H({X}_1) \rightarrow H({X}_2) \rightarrow \ldots \rightarrow H({X}_n)$.
Persistent homology identifies homology classes that are ``born'' at a certain location in the filtration
and ``die'' at a later point.  These identified cycles encompass all of the homological information in the
filtration and have a module structure~\cite{module}.  This persistence module has a unique decomposition into a sum of ``elementary'' modules,
which are intervals that share a common homological feature.  The endpoints of these intervals are the birth and death times.
The set of intervals gives a barcode representation of the persistence module.
Persistent homology algorithms have been implemented very efficiently, 
since the homology groups with coefficient in a finite field form vector spaces, and the inclusion maps induce linear 
maps between the spaces.

\begin{figure}\centering
\begin{tabular}{lclc}
    (a) & \small
     $\xymatrix@R=1.4pc @C=1.4pc{
      X_0 \ar[r] & X_1 \ar[r] & X_2 \ar[r] & X_3 \ar[r] & X_4 \\
    }$ &
    (b) & \small
    $\xymatrix@R=1.4pc @C=1.4pc{
      X_0 \ar[r] & X_1 & X_2 \ar[l] & X_3 \ar[l]\ar[r] & X_4 \\
    }$ \\[.25in]
 
 	(c) & \small
    $\xymatrix@R=1.4pc @C=1.4pc{
      X_{02} \ar[r] &  X_{12} \ar[r] & X_{22} \ar[r]] & X_{32} \\
      X_{01} \ar[r]\ar[u] &  X_{11} \ar[r]\ar[u] & X_{21} \ar[r]\ar[u] & X_{31} \ar[u] \\
      X_{00} \ar[r]\ar[u] &  X_{10} \ar[r]\ar[u] & X_{20} \ar[r]\ar[u] & X_{30} \ar[u] \\
    }$ &
    (d) & \small
    $\xymatrix@R=1.4pc @C=1.4pc{
      & X_1 \ar[rr] \ar[dr] & & X_5 \\
      & & X_3 \ar[r] \ar[dr] & X_6 \\
      & X_2 \ar[ur] \ar[r] & X_4 \ar[ur] \ar[r] & X_7  \\
    }$
\end{tabular}
\caption{\label{fig:modesl} The underlying graph structure for (a) Standard persistence, (b) zigzag persistence, 
(c) Multi-dimensional persistence and (d) DAG persistence.}
\label{fig:models}
\end{figure}

Zigzag persistence considers spaces with maps of the form ${X}_1 \leftrightarrow {X}_2 
\leftrightarrow \ldots \leftrightarrow {X}_n$, where the maps can go in either direction.  
These maps between the spaces induce maps between chain complexes which pass to homology as 
homomorphisms $H({X}_1) \leftrightarrow H({X}_2) \leftrightarrow \ldots \leftrightarrow H({X}_n)$; 
this is known as a zigzag module.  The zigzag module has the same structure as the persistence module:
it has a unique decomposition into a sum of ``elementary'' modules, which are intervals that share a common homological feature.
Like standard persistence, these intervals give a barcode representation of the zigzag module.
Recent work in this setting includes an algorithm which examines the 
order of the necessary matrix multiplications quite carefully and is able to get a running time for 
a sequence of $n$ simplex deletions or additions which is dominated by the time to multiply two 
$n \times n$ matrices~\cite{zigzag-mm}.

In multi-dimensional persistence, the spaces $\{X_u\}_{u \in \mathbb{Z}^d}$ lie on an integer lattice with inclusion
maps for $X_u \to X_v$ where $u, v \in \mathbb{Z}^d$ differ in a single coordinate.
Multi-dimensional persistence modules have a more complicated structure than standard or zigzag, so its interpretation is far more difficult.
In particular, no barcode representation exists for multi-dimensional persistence.
One of the primary tools is the rank invariant, $\rho_{X,k}(u,v)$, which measures the rank of homology groups in common among
all $X_w$ with $u_i \le w_i \le v_i$.

\paragraph{Our contribution}
In this paper, we give a generalization of persistence to spaces where the underlying inclusions 
form a directed acyclic graph (DAG).  This simultaneously generalizes both zigzag and multidimensional 
persistence, which can be viewed as special cases of these underlying 
graphs on the maps between the spaces.  

To define persistence modules over DAGs we expand upon the use of quiver modules, first applied to topological persistence in the original zigzag persistence paper~\cite{zigzag}, to incorporate commutativity conditions that arise from following different paths in a DAG between the same pair of vertices.  
Note that the authors in~\cite{Escolar2016} examined other ways of incorporating commutativity conditions in specific families of graphs.
Given a persistence module, we utilize techniques from category theory, namely limits and co-limits, to define persistence groups for any subgraph of a DAG.

We then give algorithms to compute the persistent homology for DAGs in various settings.  
In Section~\ref{sec:single}, for graphs with at most $n$ vertices, 
we give an $O(n^4)$ algorithm for  computing the persistent homology groups 
for all single-source single sink subgraphs of a DAG simultaneously.  In Section~\ref{sec:general}, we describe theoretical algorithms for computing persistence homology for general subgraphs.

Potential applications of this are extensive, including any spaces where inclusions are more 
general than previous settings. We present two such  applications in 
Section~\ref{sec:applications}.
The first uses multiple samples of the same space to accurately find
significant topological features with far fewer sample points than other
methods require.  The second application uses DAG persistence to measure the similarity
between two spaces.

\section{Definition}
\label{sec:def}

We recall some relevant definitions and background before presenting our
definition of persistent homology over directed acyclic graphs.  For a full presentation 
of homology groups see any introductory text in algebraic topology, e.g.~\cite{hatcher,munkres}.  

For a simplicial complex $X$ and an Abelian group $A$, a $k$-chain is a formal linear
combination of the $k$-simplices of $X$ with coefficients from $A$; these 
$k$-chains form a group which we denote $C_k(X,A)$ (where we will 
generally omit the $A$ if the group is clear from context).  The map $\partial_k: C_k(X) \to C_{k-1}(X)$ is a
linear map that calculates the boundary of a chain.  The cycle
group is defined as $Z_k(X) = \{ c \in C_k(X) \ | \ \partial_k(c) = 0 \} = ker(\partial_k)$ 
and the
boundary group is the group $B_k(X) = \{ c \in C_k(X) | \ \exists d \in C_{k+1}(X)$ with
$\partial_{k+1} (d) = c \} = im (\partial_{k+1})$.  The homology group
is defined as $H_k(X) = Z_k(X)/B_k(X)$.  Note that if $A$ is a field then
$C_k(X), Z_k(X), B_k(X)$ and $H_k(X)$ are all vector spaces.

Given a filtration $X_0 \subset X_1 \subset \cdots \subset X_n$, the persistent
homology group $H_k^p(X_j)$ can be defined in multiple ways.  Traditionally,
it is defined as $Z_k(X_j) / B_k(X_j) \cap Z_k(X_{j+p})$ and can be viewed
as a quotient group of $H_k(X_{j+p})$.  An equivalent definition is
$H_k^p(X_j) = im(i_*)$, where $i_*: H_k(X_j) \to H_k(X_{j+p})$ is the map
induced by the inclusion $i:X_j \to X_{j+p}$.  We note then that
$H_k^p(X_j)$ can also be thought of as a subgroup of $H_k(X_{j+p})$.

\subsection{Graph Filtrations}

In a sense, the persistence group for some interval of a filtration can be thought as the subgroups
which are common to all of the homology groups.  This motivates the definition of 
persistence groups over more general sets of inclusion maps; indeed, zigzag persistence and multi-dimensional persistence are examples of this type of generalization.  
In this paper, we generalize to consider inclusions over a set of spaces that form a  directed  graph.  
The main restriction we will place on this graph is that it must 
be acyclic and not contain repeated edges, which is a natural constraint for any graph which represents a set of inclusions.  
Note that we could have equivalently defined these filtrations using a poset, but prefer to use the graph theory terminology.
More formally:
\begin{definition}
For a simple directed acyclic graph $G = (V,E)$, a \emph{graph filtration} $\mathcal{X}_G$ of a topological space
$X$ is a pair $( \{ X_v \}_{v \in V}, \{f_e\}_{e \in E} )$ such that
\begin{enumerate}
  \item $X_v \subset X$ for all $v \in V$
  \item If $e = (v_1, v_2) \in E$ then $f_e: X_{v_1} \to X_{v_2}$ is a continuous embedding (or inclusion)
  of $X_{v_1}$ into $X_{v_2}$.
  \item The diagram commutes: in other words, suppose we have a path $\gamma = e_1, \ldots, e_l$, we can naturally extend this to a function on the topological spaces $f_\gamma = f_{e_k} \circ \cdots \circ f_{e_1}$.  Then given $\gamma$ and $\gamma'$ which are two different directed paths connecting vertices $u$ and $w$, commuting means that $f_\gamma = f_{\gamma'}$. 
\end{enumerate}
\end{definition}

\begin{figure}\centering\small
$\xymatrix@R=1.4pc @C=1.4pc{
  & X_1 \ar[rr] \ar[dr] & & X_5 \\
  & & X_3 \ar[r] \ar[dr] & X_6 \\
  & X_2 \ar[ur] \ar[r] & X_4 \ar[ur] \ar[r] & X_7  \\
}$ 
$\xymatrix@R=1.4pc @C=1.4pc{
  & H_k(X_1) \ar[rr] \ar[dr] & & H_k(X_5) \\
  & & H_k(X_3) \ar[r] \ar[dr] & H_k(X_6) \\
  & H_k(X_2) \ar[ur] \ar[r] & H_k(X_4) \ar[ur] \ar[r] & H_k(X_7)  \\
}$
\caption{A graph filtration over a graph $G$ and the corresponding commutative $G$-module.}
\label{fig:filtration}
\end{figure}

\subsection{Persistence Module}

Before defining the persistent homology groups for a directed acyclic graph, we will first generalize the persistence module~\cite{module} and zigzag perisistence module~\cite{zigzag}.  
We will use an approach similar to that used in the definition of the zigzag persistence module that utilizes quivers and their representations~\cite{barot2015representations,miyachi2000representations}.
See~\cite{oudot2015persistence} for a thorough discussion of the connections between quivers, their representations and persistent homology.

A \emph{quiver} is a directed graph where loops and multiple edges between the same vertices are allowed.
So every DAG is also a quiver.
Given a quiver $G$, a \emph{representation} of $G$ imparts every vertex of $G$ with a vector space and every edge a linear map between the vector spaces at its endpoints.  
A representation of a quiver $G$ is also referred to as a $G$-module.

In order to include the commutativity conditions that are part of a graph filtration, we must use a \emph{quiver with relations}.  
Formally, a quiver with relations is defined in terms of an ideal of an associative algebra associated to the quiver that is called the path algebra, see \cite{barot2015representations}.
However, a representation of a quiver with relations can be defined without the use of a path algebra.
We will focus only on commutativity relations, but more general relations can also be dealt with in a similar manner.
This motivates the definition of a commutative $G$-module, which is a representation of a quiver with (commutativity) relations; see Figure~\ref{fig:filtration} for an example of such a space.

\begin{definition}
For a directed acyclic graph $G = (V,E)$, a \emph{commutative $G$-module} is the pair $( \{ W_v \}_{v \in V}, \{ f_e \}_{e \in E}$)
where for each vertex $v$, $W_v$ is a vector space and for any edge $e = (v,w)$, $f_e: W_v \to W_w$ is a linear map with the condition
that the resulting diagram is commutative.  
\end{definition}

This definition provides the framework for discussing the persistence module for a graph that extends the
definition for the zigzag persistence module, which will be shown in Section~\ref{subsec:relations}.

\begin{definition}
For a directed acyclic graph $G = (V,E)$ and $k$-dimensional persistence module for a graph filtration $\mathcal{X}_G$, $\mathcal{PH}_k(\mathcal{X}_G)$, is the commutative $G$-module $( \{ W_v \}_{v \in V}, \{ f_e \}_{e \in E}$ where
	\begin{itemize}
    	\item $W_v = H_k(X_v)$ for all $v \in V$
        \item For every edge $(u,v) \in E$, $f_e: H_k(X_u) \to H_k(X_v)$ is the map induced by the inclusion $X_u \to X_v$.
	\end{itemize}
\end{definition}

The theory for commutative $G$-modules is very similar to that of zigzag persistence modules.  
A commutative $G$-module $\mathcal{V} = ( \{ V_v \}, \{ g_e \} )$ is a \emph{submodule} of $\mathcal{W} = (\{ W_v \}, \{ f_e \})$ if $V_v \subset W_v$ for all $v$ and $f_e|_{V_v} = g_e$.
Similarly, given two commutative $G$-modules $\mathcal{V} = ( \{V_v\}, \{f_e\})$ and $\mathcal{W} = ( \{W_v\}, \{g_e\})$,
we can define their connected sum $\mathcal{V} \oplus \mathcal{W}$ as $( \{V_v \oplus W_v\}, \{f_e \oplus g_e\})$.
A commutative $G$-module is said to be \emph{indecomposable} if it cannot be written as a non-trivial connected
sum.  
The commutative $G$-modules that we are considering are finite, so  $\mathcal{V}$, can be \emph{decomposed} as $\mathcal{V} = \mathcal{V}_1 \oplus \cdots
\oplus \mathcal{V}_n$, where each $\mathcal{V}_i$ is indecomposable.  
For a connected subgraph $G'$ of $G$,
we will define the commutative $G$-module $\mathbb{F}_{G'}$ as the module with a copy of $\mathbb{F}$ at each
vertex of $G'$ and zero elsewhere; we will put the identity map on each edge of $G'$ and make every other map 
trivial. We will call this module \emph{elementary}.

\begin{thm}[Krull-Remak-Schmidt Theorem for Finite-Dimensional Algebras \cite{lang}]
The decomposition of a commutative $G$-module is unique up to isomorphism and permutation of the summands.
\end{thm}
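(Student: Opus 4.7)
The plan is to reduce the statement to the classical Krull--Remak--Schmidt theorem for modules of finite length over a finite-dimensional $\mathbb{F}$-algebra, which is the result cited to Lang. The bridge is the identification of commutative $G$-modules with modules over the path algebra with relations. Concretely, I would let $\mathbb{F}G$ denote the path algebra of the quiver $G$ over $\mathbb{F}$, and let $I \subset \mathbb{F}G$ be the two-sided ideal generated by all differences $p - q$ where $p$ and $q$ are directed paths in $G$ sharing the same source and sink. Since $G$ is a finite directed acyclic graph, the path algebra $A := \mathbb{F}G / I$ is finite-dimensional over $\mathbb{F}$.

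The first step is to verify the categorical equivalence: a commutative $G$-module $(\{W_v\}, \{f_e\})$ gives an $A$-module structure on $\bigoplus_v W_v$ where an edge $e = (v,w)$ acts as $f_e$ on $W_v$ and as zero elsewhere, and conversely any $A$-module decomposes along the primitive idempotents $e_v \in A$ corresponding to the vertices to yield a commutative $G$-module. Direct sums, submodules, and indecomposability all correspond under this equivalence. Moreover, because each $W_v$ is finite-dimensional and $V$ is finite, the total dimension $\sum_v \dim W_v$ is finite, so the corresponding $A$-module has finite length.

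The second step is to invoke the classical statement. Over any ring, a module of finite length decomposes uniquely (up to isomorphism and permutation) into indecomposables provided each indecomposable summand has a local endomorphism ring; for finite-length modules this locality follows from Fitting's lemma, since for any endomorphism $\phi$ of an indecomposable finite-length module $M$ one has $M = \ker(\phi^n) \oplus \operatorname{im}(\phi^n)$ for large $n$, forcing $\phi$ to be nilpotent or an automorphism. Combining this with Azumaya's strengthening of Krull--Remak--Schmidt gives the uniqueness. The translation back through the equivalence then yields the uniqueness of decomposition for commutative $G$-modules.

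The only genuinely substantive step is the reduction to the algebra $A$, i.e.\ checking that ``commutative $G$-module'' is really the same notion as ``finitely generated module over $\mathbb{F}G/I$''; the rest is a direct citation. If one prefers a more self-contained route, one can instead prove Fitting's lemma directly for commutative $G$-modules (using that each $W_v$ is finite-dimensional so that iterates of any endomorphism stabilize vertexwise), deduce local endomorphism rings for indecomposable modules, and then run the standard exchange argument summand by summand. In either case the main obstacle is bookkeeping the commutativity relations when identifying endomorphisms, but finiteness of the DAG keeps this bounded and essentially formal.
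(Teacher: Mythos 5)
The paper does not actually prove this statement; it is cited verbatim to Lang, and the only gesture toward a justification is the preceding remark that ``Every finite dimensional algebra occurs as a quiver with relations.'' Your proposal correctly fills in what that citation is relying on, and it is the standard route: identify commutative $G$-modules with modules over the bound path algebra $A = \mathbb{F}G/I$, observe that $A$ is finite-dimensional because $G$ is a finite acyclic quiver (so there are only finitely many paths), note that a commutative $G$-module with finite-dimensional $W_v$ corresponds to a finite-length $A$-module, and then invoke Fitting's lemma to get local endomorphism rings for indecomposables and hence Krull--Remak--Schmidt. The equivalence you state in the first step is correct — direct sums, submodules, and indecomposables all correspond — and is exactly the bookkeeping that makes the classical theorem apply. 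Two small remarks: for a finite DAG you only ever take finite direct sums, so the classical KRS for finite-length modules with local endomorphism rings already suffices and Azumaya's strengthening is not needed; and the hypothesis that each $W_v$ be finite-dimensional (finitely generated over the field) is implicit in the paper's setup and is essential to your argument, so it is worth flagging explicitly when you invoke finite length. Apart from these cosmetic points, your proof is correct and matches the approach the paper's citation presupposes.
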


In the case of standard and zigzag persistence, the relevant indecomposable modules are always elementary.  This is
implied by Gabriel's theorem~\cite{quiver} which provides an enumeration of indecomposable modules
for particular graph types.  Figure \ref{fig:indecompossible} gives an example of an indecomposable
module that is not elementary.  The example consists of a sphere with four punctures, and the inclusion
of each of the four boundary components.  Unfortunately, the existence of such examples tells us that
there is no simple ``barcode'' representation for DAG persistence.  In Section~\ref{subsec:barcodes},
we will generalize barcodes for our context.

\begin{figure}
\centering
\begin{tabular}{lclclc}
(a) & \subfloat{\small$\vcenter{\hbox{\subfloat{\small\xymatrix@R=1pc @C=1pc{
A \ar[rd] & & B \ar[ld] \\
& S & \\
C \ar[ur] & & D \ar[ul]
}}}}$} &
(b) &
\subfloat{$\vcenter{\hbox{\includegraphics[width=.75in]{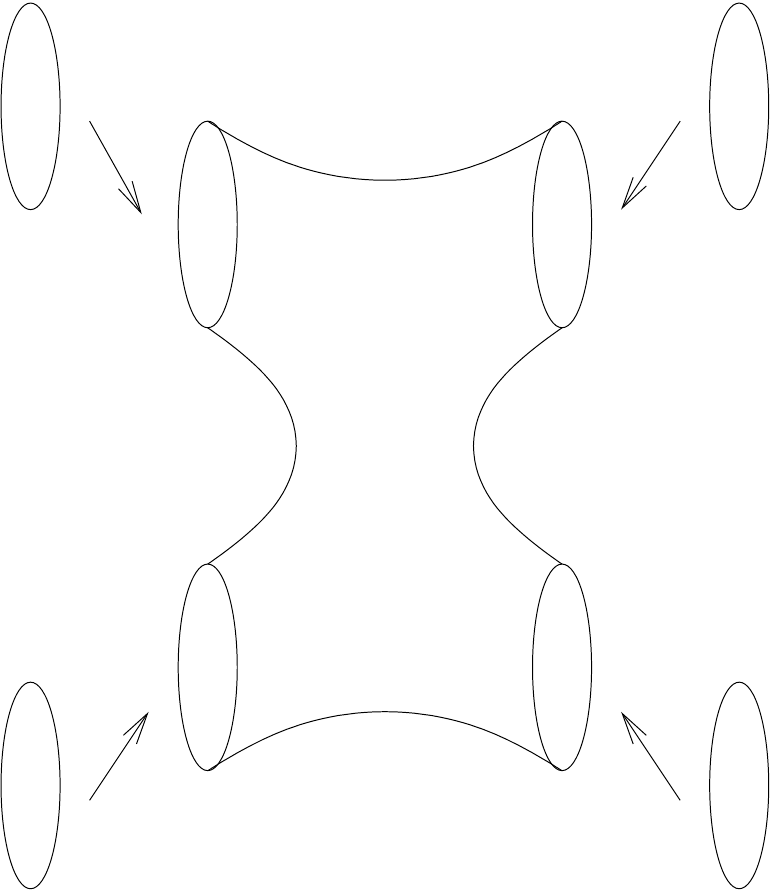}}}$}
& (c) &
\subfloat{\small$\vcenter{\hbox{\subfloat{\small\xymatrix@R=1pc @C=1pc{
\mathbb{F} \ar[rd] & & \mathbb{F} \ar[ld] \\
& \mathbb{F}^3 & \\
\mathbb{F} \ar[ur] & & \mathbb{F} \ar[ul]
}}}}$}
\end{tabular}
\caption{(a) The persistence module for the graph $G$ is not finite-type; there are infinitely many non-isomorphic irreducible $G$-modules.  
(b) A graph filtration $\mathcal{X}_G$
(c) The persistence module for $\mathcal{X}_G$ which is 
irreducible and non-elementary.
}
\label{fig:indecompossible}
\end{figure}

\subsection{Persistent Homology}

We will define persistent homology groups for any connected subgraph of $G$ to included homological features that are present in the homology groups at every vertex of the subgraph.  
To make this precise, we will define the persistence groups in terms of the limit and co-limit of a commutative $G$-module; see any text on category theory for a complete discussion of limit and co-limits, for example~\cite{mac2013categories}.

In category theory, limits and co-limits are defined for diagrams, a very general concept.
In particular, commutative diagrams with every vertex having an object and every edge a morphism are diagrams in the sense of category theory~\cite{mac2013categories}.  In particular, commutative $G$-modules are diagrams.  We will give the definitions of limits and co-limits and related structures in terms of commutative $G$-modules; however, these definitions are identical to those for arbitrary diagrams in \cite{mac2013categories}. 

The \emph{cone} of the commutative $G$-module is a pair, $(L,\phi)$, with a vector space $L$ and homomorphisms $\phi_v: L \to H_k(X_v)$, such that for any edge $(u,v)$, $f_{uv} \circ \phi_u = \phi_v$.  
The \emph{limit} of a commutative $G$-module is a cone $(L,\phi)$ such that for any other cone $(L',\phi')$, there is a unique homomorphism $u: L' \to L$ such that $\phi_e \circ n = \phi'_e$ for every edge $e$ of $G$.  
Similarly, a \emph{co-cone} of a commutative $G$-module is a pair $(C,\psi)$, with a vector space $C$ and homomorphism $\psi_v: H_k(X_v) \to C$ such that for any edge $(u,v)$, $\psi_v \circ f_{uv} = \psi u$; the \emph{co-limit} is a co-cone $(C,\psi)$ of the commutative $G$-module such that for any other co-cone $(C',\psi')$ there is a unique homomorphism $u: C \to C'$ such that $u \circ \psi_e = \psi'_e$ for every edge $e$ of $G$.  See Figure~\ref{fig:definition} for an illustration of the limit and co-limit.
When they exist, limits and co-limits are unique up to isomorphism~\cite{mac2013categories}.  
The existence of the limits and co-limits relies on a few results from category theory, all can be found in~\cite{mac2013categories}.
First, $G$-modules are in the category of vector spaces which are known to be bi-complete, which means limits and co-limits exist for diagrams from what are know as small categories.  Furthermore, commutative diagrams are small categories.  Together these imply the existence of limits and co-limits of commutative $G$-modules.
We will denote the limit and co-limit of $M$, by $lim(M)$ and $colim(M)$, and let $\mu_M: lim(M) \to colim(M)$ be the induced map between them.

\begin{definition}
\label{def:module_persistence}
If $M$ is a commutative $G$-module then the \emph{persitence} of $M$, $\mathcal{P}(M)$, is image $\mu_M(lim(M))$.
\end{definition}

The persistence of a module represents all features, or subspaces, common to every vector space $M_v$ in the commutative $G$-modules.   
This will be made precise in Lemma~\ref{lem:ssss}.  

\begin{definition}
\label{def:persistence}
Given a graph filtration $\mathcal{X}_G$ and a connected subgraph $G' \subset G$, the $G'$-\emph{persistent homology group}, $H_k^{G'}(\mathcal{X}_G)$, is $\mathcal{P}\left( \mathcal{PH}_k(\mathcal{X}_G) | _{G'} \right)$, where $\mathcal{PH}_k(\mathcal{X}_G) | _{G'}$ is the module $\mathcal{PH}_k(\mathcal{X}_G)$ restricted to the subgraph $G'$.
\end{definition}

In Section~\ref{subsec:relations}, we connect this definition of persistence to known models and prove that it simultaneously generalizes standard persistence,
zigzag persistence and multidimensional persistence. 

\begin{figure}\centering\small
\begin{displaymath}\xymatrix@R=1.4pc @C=1.4pc{
  & & H_k(X_1) \ar[rr] \ar[dr] & & H_k(X_5) \ar@{.>}[drr]\\
  lim(\mathcal{PH}_k(\mathcal{X}_G))  \ar@{.>}[urr] \ar@{.>}[rrr] \ar@{.>}[drr]
  & & & H_k(X_3) \ar[r] \ar[dr] & H_k(X_6) \ar@{.>}[rr] & & colim(\mathcal{PH}_k(\mathcal{X}_G)) \\
  & & H_k(X_2) \ar[ur] \ar[r] & H_k(X_4) \ar[ur] \ar[r] & H_k(X_7) \ar@{.>}[urr] \\
}\end{displaymath}
\caption{The limit, $L_{\mathcal{X}_G} $, and co-limit, $C_{\mathcal{X}_G} $, of a diagram of homology groups.
(For clarity, only maps from the limit to source vertices and maps from sink vertices to the co-limit are shown.)}
\label{fig:definition}
\end{figure}

\subsection{Barcodes and Carrier Subgraphs}
\label{subsec:barcodes}

We wish to generalize the notion of a persistence barcode to the DAG setting.  
Every submodule of $\mathcal{PH}(\mathcal{X}_G)$ has a subgraph $G'$ where the module is non-trivial.
We will call this the \emph{carrier subgraph} of that module and say that the module is
\emph{carried} by that subgraph.
In standard and zigzag persistence all irreducible
modules are elementary, and a barcode is precisely a carrier subgraph of this module.
In our more general setting, we cannot assume that irreducible modules are elementary.
In fact, these irreducible submodules can be very complicated.
For example, in Figure~\ref{fig:indecompossible} the carrier subgraph would be the entire graph.
Even though carrier subgraphs for general DAGs are not a complete invariant, that is not all information is encoded in these representations, they can still provide insight into the structure of the persistence module.

\subsection{An Example}
\label{subsec:ex}

In Figure~\ref{fig:example}, we see an example of a set of spaces with inclusions that form a 
directed acyclic graph.   At the top level is a genus two surface; the directed arrows indicate the 
inclusion maps in our directed acyclic graph, down to our two source vertices in the graph which 
include one space with two disjoint annuli and one space that is a disk with three additional boundaries.
The graph forms a poset that demonstrates the non-trivial intersections and unions of the three 
surfaces.  In Figure~\ref{fig:example}b, we see the persistence module for the entire 
space, and in Figures~\ref{fig:example}c and d we see the indecomposable submodules and their carrier subgraphs.
From these indecomposables it is possible to read off the persistence for any subgraph $G'$
by counting how many of the elementary modules have $G'$ as a subgraph.

\begin{figure}
\centering
\begin{tabular}{lclc}
(a) & $\vcenter{\hbox{\includegraphics[width=2in]{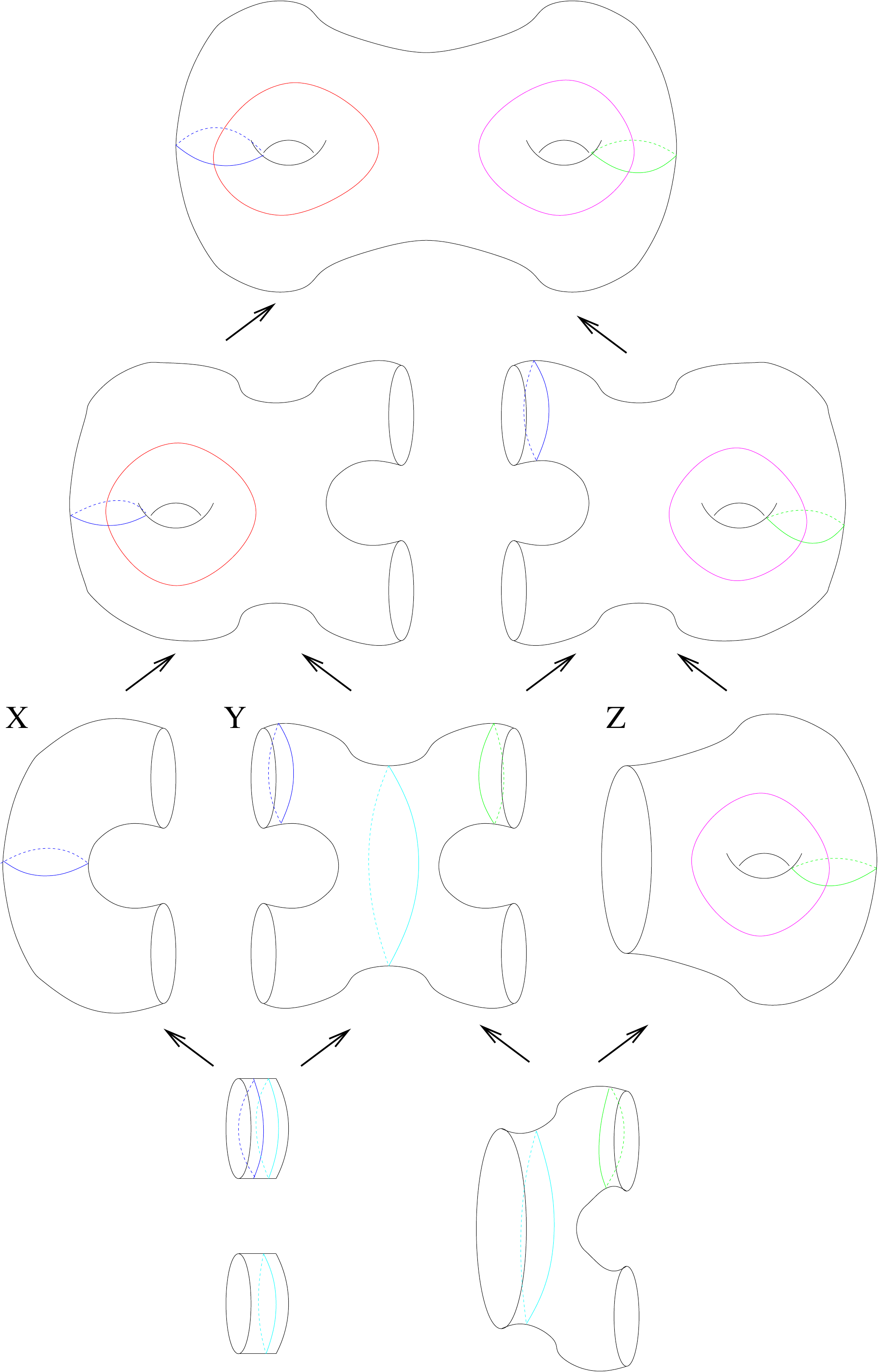}}}$ &
(b) & $\vcenter{\hbox{
{\xymatrix@R=2pc @C=2pc{
 & \mathbb{F}^4 & \\
  \mathbb{F}^3 \ar[ur] & & \mathbb{F}^3 \ar[ul] \\
  \mathbb{F} \ar[u] & \mathbb{F}^3 \ar[ul]\ar[ur] & \mathbb{F}^2 \ar[u] \\
  \mathbb{F}^2 \ar[u]\ar[ur] & & \mathbb{F}^2 \ar[ul]\ar[u]
}}}}$\\
\vspace*{.25in} & \\
(c) & $\vcenter{\hbox{\begin{tabular}{cc}
{\tiny\xymatrix@R=.75pc @C=.75pc{
  & \mathbb{F} & \\
  \mathbb{F} \ar[ur] & & \mathbb{F} \ar[ul] \\
  \mathbb{F} \ar[u] & \mathbb{F} \ar[ul]\ar[ur] & {\color{gray} 0} \ar@{.}\ar@{.}[u] \\
  \mathbb{F} \ar[u]\ar[ur] & & {\color{gray} 0} \ar@{.}[ul]\ar@{.}[u]
}} & 
 {\tiny\xymatrix@R=.75pc @C=.75pc{
  & \mathbb{F} & \\
  \mathbb{F} \ar[ur] & & \mathbb{F} \ar[ul] \\
  {\color{gray} 0} \ar@{.}[u] & \mathbb{F} \ar[ul]\ar[ur] & \mathbb{F} \ar[u] \\
  {\color{gray} 0} \ar@{.}[u]\ar@{.}[ur] & & \mathbb{F} \ar[ul]\ar[u]
}} \\
{\tiny\xymatrix@R=.75pc @C=.75pc{
  & \mathbb{F} & \\
  \mathbb{F} \ar[ur] & & {\color{gray} 0} \ar@{.}[ul] \\
  {\color{gray} 0} \ar@{.}[u] & {\color{gray} 0} \ar@{.}[ul]\ar@{.}[ur] & {\color{gray} 0} \ar@{.}[u] \\
  {\color{gray} 0} \ar@{.}[u]\ar@{.}[ur] & & {\color{gray} 0} \ar@{.}[ul]\ar@{.}[u]
}} &
 {\tiny\xymatrix@R=.75pc @C=.75pc{
  & \mathbb{F} & \\
  {\color{gray} 0} \ar@{.}[ur] & & \mathbb{F} \ar[ul] \\
  {\color{gray} 0} \ar@{.}[u] & {\color{gray} 0} \ar@{.}[ul]\ar@{.}[ur] & \mathbb{F} \ar[u] \\
  {\color{gray} 0} \ar@{.}[u]\ar@{.}[ur] & & {\color{gray} 0} \ar@{.}[ul]\ar@{.}[u]
}} \\
{\tiny\xymatrix@R=.75pc @C=.75pc{
  & {\color{gray} 0} & \\
  {\color{gray} 0} \ar@{.}[ur] & & {\color{gray} 0} \ar@{.}[ul] \\
  {\color{gray} 0} \ar@{.}[u] & \mathbb{F} \ar@{.}[ul]\ar@{.}[ur] & {\color{gray} 0} \ar@{.}[u] \\
  \mathbb{F} \ar@{.}[u]\ar[ur] & & \mathbb{F} \ar[ul]\ar@{.}[u]
}} 
\end{tabular}}}$ &
(d) & $\vcenter{\hbox{\includegraphics[width=1.6in]{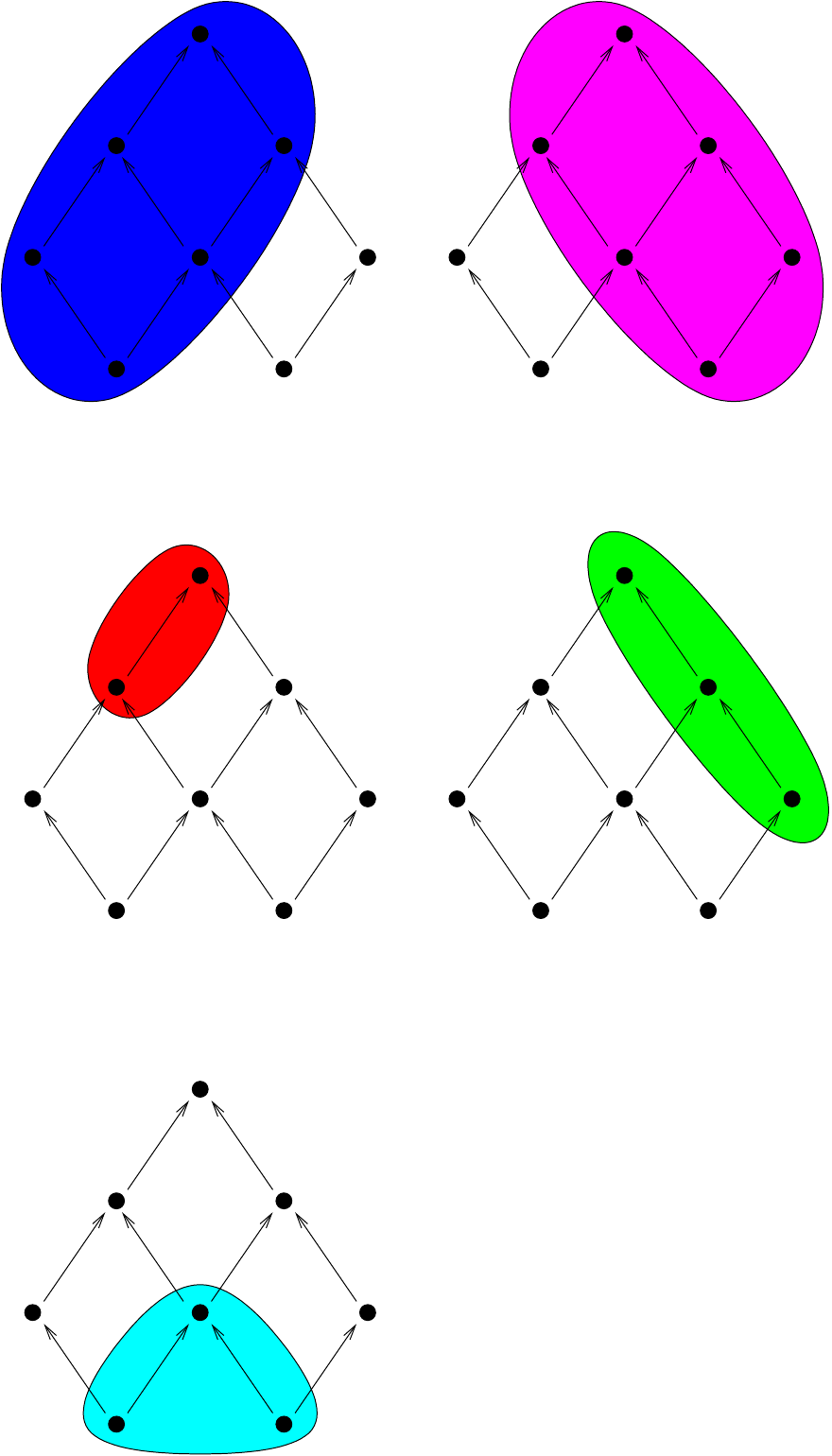}}}$ \\
\end{tabular}
\caption{(a) A genus two surface that is the union of three subsurfaces ($X$, $Y$ and $Z$), with maps between subspaces forming a DAG, (b) its persistence
module $\mathcal{PH}_1(\mathcal{X}_G)$, (c) the five indecomposable summands of the persistence module (indicated in bold)
and (d) the five carrier subgraphs (marked by ovals and shaded different colors) for this example.}
\label{fig:example}
\end{figure}

\section{Properties of Persistence Module}

Before examining algorithms to calculate persistence and persistence modules over DAGs, we will examine various properties of commutative $G$-modules and their relationship with the persistence module and zigzag persistence module.

\subsection{Fundamental Properties}

There are several properties of commutative $G$-modules and their persistence that are useful in seeing their relationships with existing models of persistence and how they can be computed.  The first shows how the persistence of a commutative $G$-module, $M$, carries information common to every vertex group $M_v$.

\begin{lemma}
If $M$ is a commutative $G$-module then $$M \cong \mathcal{P}(M)_G \oplus N$$
where $\mathcal{P}(M)_G$ is the module with a copy of $\mathcal{P}(M)$ at each vertex of $G$ and $\mathcal{P}(N) = 0$.
\end{lemma}
\begin{proof}
Consider the linear map $\mu_m: lim(M) \to colim(M)$.
This gives two decompositions
\begin{eqnarray*}
 lim(M) &\cong& im(\mu_M) \oplus ker(\mu_m) \ = \ \mathcal{P}(M) \oplus ker(\mu_m) \\
 colim(M) &\cong& im(\mu_M) \oplus cok(\mu_m) \ = \ \mathcal{P}(M) \oplus cok(\mu_m)
\end{eqnarray*}
where $cok(\mu_M) = N/im(\mu_m)$.
So there is a copy of $\mathcal{P}(M) \subset lim(M)$  that maps isomorphically to $\mathcal{P}(M) \subset colim(M)$.  
Consider the limit of $M$, $\mathcal{P}(M)$, and the maps $\phi_v: \mathcal{P}(M) \to M_v$.
This decomposes $M_v \cong P \oplus cok(\phi_v)$.
Notice that the maps $M_u \to M_v$ preserve this decomposition.
This established that $M \cong \mathcal{P}(M)_G \oplus N$, where
$N_v \cong cok(\phi_v)$.

To show that $\mathcal{P}(N) = 0$, note that $\mathcal{P}(M_1 \oplus M_2) \cong \mathcal{P}(M_1) \oplus \mathcal{P}(M_2)$, since limits and co-limits preserve direction sums~\cite{mac2013categories}.
If  $\mathcal{P}(N) \neq 0$ then 
$$\mathcal{P}(M) \cong \mathcal{P}(\mathcal{P}(M)|_G) \oplus \mathcal{P}(N)
\cong \mathcal{P}(M) \oplus \mathcal{P}(N) \not\cong \mathcal{P}(M)$$
This contradiction shows that the persistence of $N$ is trivial.
\end{proof}

Note that the previous implies that for every vertex $v$ of $G$, there is an injection of $\mathcal{P}(M)$ into $M_v$.
This lemma yields a characterization of when indecomposable submodules are elementary; precisely when their persistence is non-trivial.

\begin{cor}
\label{lem:modulesinglesource}
Assume $M$ is an indecomposable commutative $G$-module then
$\mathcal{P}(M) \neq 0$ if and only if $M \cong \mathbb{F}_G$.
\end{cor}
\begin{proof}
Using the decomposition $M \cong \mathcal{P}(M)_G \oplus N$ in the previous lemma, we observe that since $M$ is irreducible one of the terms must be trivial.  In one case $M = \mathcal{P}(M)_G = \mathbb{F}_G$ and in the other $\mathcal{P}(M) = 0$.
\end{proof}

The next lemma provides the basis of both computing DAG persistence on a large class of subgraphs and relating this model of persistence to existing theories.

\begin{lemma}
\label{lem:ssss}
If $G$ is a single-source single-sink graph with source and sink vertices $s$ and $t$, respectively, and $M = (\{M_V\},\{f_e\})$ is a commutative $G$-module then $\mathcal{P}(M) \cong im\left(M_s \to M_t\right)$
\end{lemma}
\begin{proof}
First consider the limit of $M$.  
Notice that for any cone $(C,\{g_v\})$ of $M$, all of the maps $g_v: C \to M_v$ are completely determined by $g_s$; specifically $g_v = f_{\gamma_{sv}} \circ g_s$ where $\gamma_{sv}$ is any path in $G$ from $s$ to $v$.
We claim that $(M_s, \{f_{\gamma_{sv}}\})$ is the limit of $M$.
$$\small\xymatrix@R=1.4pc @C=2.4pc{
M_s \ar[dr]^{1} \ar@/^1pc/[drrr]_{f_{\gamma_{sv}}} \ar@/^2pc/[drrrrr]_{f_{\gamma_{st}}} \\
& M_s \ar[r] & \ldots \ar[r] & M_v \ar[r] & \ldots \ar[r] & M_t \\
C \ar[ur]^{g_s} \ar@/_1pc/[urrr]^{f_{\gamma_{sv}}\circ g_s} \ar@/_2pc/[urrrrr]^{f_{\gamma_{st}} \circ g_s} \ar@{-->}[uu]^{g_s}
}$$
In the above diagram it is clear that the cone from $C$ factors through the cone $M_s$.

A similar argument shows that the co-limit of $M$ is $(M_t,\{f_{\gamma_{vt}}\})$.
So, $\mathcal{P}(M)$, the image of the limit in the co-limit is identical to the image
$M_s \to M_t$.
\end{proof}

%

\subsection{Relationship With Other Models of Persistence}
\label{subsec:relations}

We saw in Section~\ref{sec:def} the definition for standard persistence.  Multidimensional
persistence can also be defined in terms of the image of a map.  Consider a multifiltration,
or $d$-dimensional grid of spaces equipped with a partial ordering on vertices, where
$u = (u_1,\ldots, u_d) \le v = (v_1, \ldots, v_d)$ if and only if $u_i \le v_i$ for
all $i$.  %

$$\xymatrix@R=1pc @C=1pc{
X_{x} \ar[r] & \cdots\ar[r] & X_{v} \\
\vdots \ar[u] & & \vdots \ar[u] \\
X_{u} \ar[r]\ar[u] & \cdots\ar[r] & X_{y} \ar[u] \\
}$$

The \emph{rank invariant}, $\rho_{X,k}(u,v)$, is defined as the dimension of the
image of the induced map $H_k(X_u) \to H_k(X_v)$~\cite{zigzag}.  Since the diagram commutes,
this map can be found following any path from $u$ to $v$ in the graph.

Zigzag persistence is defined in terms of the zigzag module.  The definition of
a commutative $G$-module and a $\tau$-module for zigzag persistence are identical
for a zigzag graph.  The following proposition specifies how DAG persistence
generalizes these three notions of persistence.

\begin{prop}
\label{prop:comparison}
Suppose $\mathcal{X}_G$ is a graph filtration of $X$.  Then:

\begin{enumerate} 
  \item (Standard persistence) If $G$ is the graph corresponding to the filtration
    $X_0 \to X_1 \to \cdots \to X_n$ and $I_{i,p}$ is the subgraph consisting of vertices
    $\{X_i, \ldots, X_{i+p}\}$ then $H_k^{I_{i,p}}(\mathcal{X}_G) \cong H_k^p(X_i)$.
    Furthermore, $\mathcal{PH}_k(\mathcal{X}_G)$ coincides with the persistence module. 
    
  \item (Zigzag persistence) If $G$ is the graph for zigzag persistence
    $\mathbb{X} = X_0 \leftrightarrow X_1 \leftrightarrow \cdots \leftrightarrow X_n$ where 
    each arrow could go in either direction,
    then $\mathcal{PH}_k(\mathcal{X}_G)$ coincides with the zigzag persistence module.
    
  \item (Multidimensional persistence) Let $\mathcal{X} = \{ X_v \}_{v \in \{0,\ldots,m\}^d}$ be a multifiltration with
    underlying graph $G$.  If $G_{u,v}$ is the subgraph with vertices $\{ w \in G \ | \ u \le w \le v \}$ then
    the rank invariant $\rho_{X,k}(u,v) = \dim H_k^{G_{u,v}}(\mathcal{X}_G)$.
\end{enumerate}
\end{prop}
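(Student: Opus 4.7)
The plan is to derive all three parts from a single unifying principle: whenever a connected subgraph $G'$ has a unique source $s$ and a unique sink $t$, the DAG persistence group $H_k^{G'}(\mathcal{X}_G)$ has dimension equal to the rank of the induced map $\phi : H_k(X_s) \to H_k(X_t)$ obtained by composing along any directed path from $s$ to $t$ (this map is well-defined because the graph filtration commutes). Parts 1 and 3 are then immediate applications of this principle, while part 2 requires only an observation about the shape of a zigzag graph.

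To establish the principle, I would argue both bounds. One direction: if $\mathcal{P}$ is any candidate space with compatible injections $\{i_v\}$, commutativity forces $i_t = \phi \circ i_s$, and since $i_t$ is injective we must have $\dim \mathcal{P} \le \operatorname{rank} \phi$. For the matching construction, I would take $\mathcal{P} := \operatorname{im} \phi \subseteq H_k(X_t)$ and, using the fact that we work over a field, choose a linear splitting $\sigma : \mathcal{P} \to H_k(X_s)$ of the surjection $H_k(X_s) \twoheadrightarrow \operatorname{im} \phi$. Define $i_v := \phi_{s \to v} \circ \sigma$ where $\phi_{s \to v}$ is the common map induced by any directed path from $s$ to $v$ in $G'$; these commute with the edge maps of $G'$ by construction. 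Each $i_v$ is injective because it is a left factor of $i_t = \phi \circ \sigma = \operatorname{id}_{\mathcal{P}}$.

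Part 1 then follows at once: $I_{i,p}$ is a linear chain and hence single-source single-sink, with $\phi$ being precisely the inclusion-induced map $H_k(X_i) \to H_k(X_{i+p})$ whose image is $H_k^p(X_i)$. The second sentence of part 1, identifying $\mathcal{PH}_k(\mathcal{X}_G)$ with the classical persistence module, is a direct unwinding of definitions: on a linear graph a commutative $G$-module is literally a sequence of vector spaces with compatible linear maps. Part 3 is the same argument applied to the sub-grid $G_{u,v}$, which has $u$ as unique source and $v$ as unique sink; the rank invariant $\rho_{X,k}(u,v)$ is by definition $\operatorname{rank} \phi$, so $\dim H_k^{G_{u,v}}(\mathcal{X}_G) = \rho_{X,k}(u,v)$.

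For part 2, I would observe that the zigzag graph, viewed as an undirected graph, is a tree, so no two distinct directed paths share both endpoints. Consequently the commutativity requirement in the definition of a commutative $G$-module is vacuous on a zigzag graph, and a commutative $G$-module in our sense is exactly a zigzag module in the sense of \cite{zigzag}; the isomorphism $H_k(\mathbb{X}) \cong \mathcal{PH}_k(\mathcal{X}_G)$ is then tautological. The only even mildly subtle step in the whole proof is the simultaneous injectivity of the lifts $i_v$ in the unifying principle, and this is handled cleanly by factoring through the identity $i_t$.
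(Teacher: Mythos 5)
Your proof is correct and follows essentially the same approach as the paper's: both establish the dimension count for a single-source single-sink subgraph by composing along the graph to get $\phi$ (giving the upper bound, since the injection at the sink must factor through $\phi$) and splitting the corestriction of $\phi$ onto its image (giving the lower bound). The paper carries this argument out explicitly only for part 1 and then remarks that parts 2 and 3 follow similarly or by unwinding definitions, while you factor it out as a general lemma first; the core reasoning is identical.
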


\begin{proof}
In the case of zigzag persistence, the definition of the zigzag persistence module is identical to the definition of the DAG persistence module; in particular for the case of a zigzag graph $G$ the definition of a commutative $G$-module coincides exactly with the definition of a $\tau$-module~\cite{zigzag}.  
Note that the standard persistence module is a special case of the zigzag persistence module so this also shows that the persistent module coincides with the DAG persistence module when the graph is a single path.

In the case of standard persistence, Lemma~\ref{lem:ssss}
implies that $H_k^{I_i,p}(\mathcal{X}_G) = im(H_k(X_i) \to H_k(X_{i+p})$, which is one of the definitions of the (standard) persistent homology group.  So the two ways of calculating persistent homology groups coincide for graphs that are paths.

In the case of multidimensional persistence, $\rho_{X,k}(u,v)$ is defined as the rank of the image $H_k(X_u) \to H_k(X_v)$.  Lemma~\ref{lem:ssss} can also be applied in this situation to show that $H_k^{G_{u,v}} = im(H_k(X_u) \to H_k(X_v))$.  So, the rank invariant is the same as the rank of the (DAG) persistent homology group.
\end{proof}

\section{Single-Source Single-Sink Subgraphs}
\label{sec:single}

In this section, we consider $\mathcal{X}_G$ where $G$ is a directed 
acyclic graph with a single source vertex $s$.  Also, each vertex of the graph will represent a subset of a particular fixed cell complex, although this final complex may or may not actually appear as a complex attached to a vertex in $G$. 
To limit the number of cells which $f_e$ can introduce (where $e = (u,v) \in G$), 
we assume that each inclusion $f_e$ adds a single cell to the underlying space.  
We will also assume that at each source vertex $s$ that $X_s = \emptyset$.
These assumptions are standard in most persistence algorithms~\cite{zigzag-mm,module} 
and quite natural given that we can decompose any inclusion map into a series
of inclusions of one simplex at a time.

If we consider a subgraph $G'\subset G$ with source $s$ and sink $t$, Lemma~\ref{lem:ssss} implies that we only need to find the image of
$H_k(X_s) \to H_k(X_t)$.  This image can be found by following any path
from $s$ to $t$ and applying the standard persistence algorithm.
This can be repeated for every pair of vertices of $G$.
Therefore, a straightforward application of the cubic time
standard persistence algorithm would yield an $O(n^2 l^3)$ algorithm, where $n$ is the number of vertices
of the graph and $l$ is the length of the longest directed path from source to sink.  Using tools from 
recent work to compute zigzag persistence in matrix multiply time~\cite{zigzag-mm} 
would give a running time of $O(n^2 (M(l))+l^2 \log^2 l))$, where $l$ is the length of the longest path 
between any source and sink and $M(l)$ is the time to multiply two $l \times l$ matrices.

Here, we will adapt the standard persistence algorithm~\cite{persistence}, shown in Algorithm~\ref{alg:standard}, to calculate the 
persistent homology for all single-source single-sink subgraphs.  First, we recall
 the standard persistence algorithm, which starts with the $(k+1)$-dimensional boundary map stored in a matrix.  
 This means that there is a column for each $(k+1)$-cell and a row for each $k$-cell.
Each column stores the boundary of its $(k+1)$-cell, with a $1$ or $-1$ in each entry, where the sign
depends on the positive or negative orientation of the $k$-cell as a face in the larger $(k+1)$-cell.
There is also an additional
array that stores the index of the first non-zero entry in each column.  (In the original
paper, it was assumed that the field was $\mathbb{Z}_2$, but the same algorithm works
for any finite field.)  For completeness, below is pseudocode for the standard algorithm
applied to a boundary matrix $B$.  Note that the entries for the matrix $\{b_{cr}\}$ are stored in column major order and that the $i$-th column of $B$ will be denoted by $B_i$.

\begin{algorithm}
\caption{Standard Persistence}
\begin{algorithmic}
\Procedure{StandardPersistence}{$B$}
\For{$c=1..\#$ columns of $B$}
	\For{$c'=1..c-1$}
    	\State Let $r$ be the first row where $b_{c'r}$ is non-zero
    	\If{$b_{cr} \neq 0$}
    		\State Replace $B_c$ with $B_c - b_{cr}(b_{c'r})^{-1} B_{c'}$
        \EndIf
    \EndFor
\EndFor
\State \textbf{return} $B$
\EndProcedure
\end{algorithmic}
\label{alg:standard}
\end{algorithm}
At its completion, the algorithm terminates with a matrix where the index each column is the death time of a cycle that has birth time at the index of its first non-zero row.  And a basis for $H_k^p(X_i)$ can be extracted from $B$ extracting the columns $1, \ldots, i+p$ whose first non-zero entry has index at least $i$~\cite{persistence}.

Our algorithm to calculate persistence on every single-source single-sink subgraph of $G$ relies on two observations.  The first is that the standard persistence algorithm can be modified to run on a tree without increasing its asymptotic runtime.  The second, is that every DAG can be covered in a linear number of trees so that for any vertices $u$ and $v$ that have a directed path between them in $G$ also have a path between them in at least one of the trees.  So we can call the tree based persistence algorithm $n$ times and recover all of the persistent homology groups for single-source single-sink subgraphs of $G$.

To calculate persistence on a tree, $T$, we make the same assumptions as before, at the root $r$, $X_r = \emptyset$ and the transition from each edge add a single cell.  Let $O$ be an ordered list of the vertices of $T$ in any topological ordering and let $<_T$ be the partial order for the elements of the tree.  The boundary matrix $B$ will be an $|O|\times|O|$ matrix with both rows and columns indexed by $O$.  The entries of the column $B_v$ will be the boundary of the cell $\sigma$ that is added in the inclusion $X_u \to X_v$ where $(u,v)$ is an edge of $T$.  Notice that if $T$ was just a path, the matrix $B$ would be the same as the boundary matrix used in the standard persistence algorithm.  Below is the algorithm.

\begin{algorithm}
\caption{Persistence Calculation on a Tree}
\label{alg:tree}
\begin{algorithmic}
\Procedure{TreePersistence}{$B$,$O$,$<_T$}
\For{$c \in O$}
	\For{$c' \in O$}
    	\If {$c <_T c'$}
        	\State Let $r \in O$ be the first row with $b_{c'r}$ non-zero
          	\If{$b_{cr} \neq 0$}
              	\State Replace $B_c$ with $B_c - b_{cr}(b_{c'r})^{-1} B_{c'}$
          	\EndIf
      	\EndIf
    \EndFor
\EndFor
\State \textbf{return} $B$, $R$
\EndProcedure
\end{algorithmic}
\end{algorithm}

If $\gamma \subset T$ is a path from $u$ to $v$ then a basis for $H_k^\gamma(\mathcal{X}_G)$ can be found by extracting the column from $B$ with column index at most $v$ in the partial ordering $<_T$ and that have their first non-zero entry at least $u$ in the partial ordering.

\begin{lemma}
If $\{\gamma_v\}$ are paths the from the root $r$ to the vertex $v$ in $T$ then
Algorithm~\ref{alg:tree} correctly finds a basis $H_k^{\gamma_v}(\mathcal{X}_G)$ for all $k$ and all $v \in T$ in $O(n)$ time, where $n$ is the number of vertices of $G$.
\end{lemma}
\begin{proof}
Notice that entries in column $v$ and row $u$ of $B$ are always $0$ if $u$ and $v$ are not connected by a directed path in $T$.  
In the calculation of column $v$ of the matrix, only columns $u$ with $u \le_T v$ can affect the values in the column.  
If the calculations involving the other columns are ignored, then the calculation is identical to the one done in the standard persistence algorithm (with extraneous zeros) and returns the same result.

The algorithm clearly runs in $O(n^3)$ time since each loop proceeds through at most $n$ elements and the body of the inner loop takes linear time to scan for the first non-zero elements and update the row.
\end{proof}

To calculate persistence for all single-source single-sink subgraphs, we construct $n$ trees, one for each vertex $v$ in the graph.  
The tree starts with a path from a source vertex to $v$ combined with a tree that spans all vertices $u$ where there is a path $u$ to $v$.  
Since the persistent homology for these graphs does not depend on the path taken, running the tree algorithm on this tree will always calculate persistence for all single-sink single-source subgraphs with source $v$.

\begin{algorithm}
\caption{Persistent Homology for All Single-Source Single-Sink Subgraphs}
\label{alg:ssss}
\begin{algorithmic}
\Procedure{AllSingleSournceSingleSink}{$\mathcal{X}_G, V, E, <_G$}
	\For {$v \in V$}
    	\State Let $\gamma$ be a path from any source to $v$
    	\State $U = \{ u \in V  \ | \ v \le_G u \} \cup \gamma$
    	\State Let $T \subset G $ be a tree rooted at $v$ with vertex set $U$
        \State Let $O$ be a topological ordering of the vertices of $T$
        \State Construct a square matrix $B$ of all zeros indexed by the elements of $O$
        \For {$v \in O$}
        	\State Let $(u,v)$ be an edge of $T$
        	\State Let $\sigma$ be the simplex of $X_v - X_u$
            \State Set the column indexed by $v$ to store the entries of $\partial\sigma$
        \EndFor
        \State $B_v =$ \Call{TreePersistence}{$B, O, <_T$}
    \EndFor
    \State \textbf{return} $\{B_v\}$
\EndProcedure
\State\textbf{Note:}
The tree $T$ is a spanning tree for the subgraph of $G$ with vertex set $U$, the partial ordering and partial order can be constructed in $O(n \log n)$ time at the same time a topological ordering can be constructed using standard graph algorithms, see~\cite{cormen2009introduction} for details.
\end{algorithmic}\end{algorithm}

\begin{thm}
\label{thm:SSSS}
If $G$ is a directed acyclic graph that has $n$ vertices then Algorithm~\ref{alg:ssss} calculates all of the persistent homology groups with coefficients from a finite field for all
single source-single sink subgraphs of $G$ can be calculated in 
$O(n^4)$ time and $O(n^3)$ space.
\end{thm}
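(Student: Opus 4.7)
The plan is, for each potential source $u \in V$, to compute all persistence invariants $H_k^{(u,w)}$ for $w$ reachable from $u$ in total time $O(v\, l^2)$ via one depth-first sweep, and then sum over the $v$ choices of $u$. Two structural observations drive the reduction. Because the graph filtration commutes, the flow-graph view of Section~\ref{sec:flow} tells us that for any single source–single sink subgraph $G' \subseteq G$ with source $u$ and sink $w$, the persistent homology $H_k^{G'}(\mathcal{X}_G)$ equals the image of $H_k(X_u) \to H_k(X_w)$ along any directed path; thus the invariant depends only on the pair $(u,w)$, and there are at most $v^2$ distinct values to produce. Also, since each edge adds exactly one simplex and every directed path from the global source to $v$ has length equal to $|X_v|$, we have $|X_v| \leq l$ for every $v$.

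Fix $u$. I would first compute the reduced $\mathbb{F}_q$-boundary matrix $R_u$ of $X_u$ in $O(l^3)$ time by the standard persistence algorithm, together with a counter $c := \dim H_k(X_u)$ and a marking of its $X_u$-born cycles. Next, choose a spanning arborescence $T_u$ of the vertices reachable from $u$ (via DFS in $G$); every tree-edge $(v,w) \in T_u$ adds exactly one simplex $\sigma_{v,w}$ to the filtration. Then DFS along $T_u$ while maintaining $R_u$ as a stack of reduced columns: descending through $(v,w)$ we append the column $\partial \sigma_{v,w}$ and reduce it against existing pivots in $O(l^2)$ time, decrement $c$ iff the resulting pivot pairs with an $X_u$-born cycle, and output $c$ as $\dim H_k^{(u,w)}$. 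Backtracking pops the column and reverses the pairing/counter update in $O(1)$; this is consistent because the standard reduction only modifies the newly appended column, leaving earlier pivoted columns untouched, so no matrix copying is ever needed.

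The per-source cost is $O(l^3) + O(v \cdot l^2) = O(v\, l^2)$ using $l \leq v$; summing over the $v$ choices of source gives $O(v^2 l^2)$ overall. The main obstacle I expect is verifying the first reduction: namely, that path independence via commutativity really does collapse every single source–single sink subgraph's persistence to the image-rank of $H_k(X_u) \to H_k(X_w)$, so that sweeping along \emph{any} arborescence computes the correct answer for all such $G'$. The rest — confirming that the incremental reduction is genuinely $O(l^2)$ per step and that $\dim H_k^{(u,w)}$ is extractable in $O(1)$ per step from the currently active pivots paired to $X_u$-born cycles — is standard book-keeping on the classical persistence algorithm.
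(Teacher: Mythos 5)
Your proposal is correct and follows essentially the same approach as the paper: fix each potential source $u$, run the standard persistence column-reduction incrementally along a DFS/spanning tree of the vertices reachable from $u$, using that for a single source--single sink subgraph the persistence collapses (by commutativity) to $\mathrm{rank}\bigl(H_k(X_u) \to H_k(X_w)\bigr)$, with each update costing $O(l^2)$ because all matrices are bounded by $l \times l$. The one small refinement is that you maintain the reduced matrix as a stack and undo appended columns on backtrack, whereas the paper stores a snapshot of the matrix at each branch point of the tree; the two are asymptotically equivalent, and your observation that column reduction never touches earlier pivot columns is exactly what makes the $O(1)$ pop sound.
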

\begin{proof}
Consider a single-source single-sink subgraph $G'$ with source $u$ and sink $v$.  Lemma~\ref{lem:ssss} implies that $H_k^{G'}(\mathcal{X}_G) \cong im(H_k(X_u) \to H_k(X_v))$, which we know that Algorithm~\ref{alg:tree} correctly calculates and stores in $B_u$.

Algorithm~\ref{alg:ssss} loops through each vertex once.  
The body of the loop takes $O(n \log n)$ time to construct the tree and associate structures and $O(n^3)$ time for the call to Algorithm~\ref{alg:tree}.  This gives a total runtime of $O(n^4)$.  
Notice that each call to tree persistence uses $O(n^2)$ space so the totral space usage is $O(n^3)$.
\end{proof}

We note that if we remove the assumption that coefficients are from a finite field, our theorem
still accurately counts the number of arithmetic operations, although running times for each 
operation might take longer.  In 3 dimensions, however, no information is 
lost when persistent homology is taken with respect to a finite field~\cite{persistenthomotopy},
so this is not an overly restrictive assumption.
We also note that this algorithm can be adapted to 
multidimensional persistence to give an improvement over the known polynomial time 
algorithm~\cite{computingmdp}:

\begin{prop}
\label{prop:lattice}
All of the rank invariants for a $d$-dimension lattice with $n$ nodes can be calculated
in $O(n^{4-1/d})$ time.
\end{prop}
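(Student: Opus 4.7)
The plan is to reduce the computation to Theorem~\ref{thm:SSSS} (plus, for $d=1$, the standard persistence algorithm of~\cite{persistencealgorithm}). Writing $m = n^{1/d}$ for the side length, the $d$-dimensional lattice has $v = n$ vertices, $O(dn) = O(n)$ edges, and longest monotone path length $l = d(m-1) = \Theta(n^{1/d})$. By part (3) of Proposition~\ref{prop:comparison}, each rank invariant $\rho_{X,k}(u,v)$ equals $\dim H_k^{G_{u,v}}(\mathcal{X}_G)$, and every interval sub-lattice $G_{u,v} = \{w : u \le w \le v\}$ is a single-source single-sink subgraph of the whole lattice, with source $u$ and sink $v$. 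Hence a single invocation of the algorithm behind Theorem~\ref{thm:SSSS} simultaneously produces all rank invariants.

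Substituting $v = n$ and $l = \Theta(n^{1/d})$ into the $O(v^2 l^2)$ bound of Theorem~\ref{thm:SSSS} yields $O(n^{2+2/d})$. A short check shows $2 + 2/d \le 4 - 1/d$ whenever $d \ge 2$, so this fits inside the target $O(n^{4-1/d})$ for every $d \ge 2$. The lone remaining case is $d = 1$, where the lattice degenerates to a path on $n$ vertices and Theorem~\ref{thm:SSSS} is too coarse (only $O(n^4)$). In that case I would instead run the standard persistence algorithm of~\cite{persistencealgorithm}, which outputs the entire barcode in $O(n^3) = O(n^{4 - 1/d})$ time, and then read off each $\rho_{X,k}(i,j)$ as the number of barcode intervals containing $[i,j]$.

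The main obstacle I anticipate is not the arithmetic on the exponent but verifying that the single-source single-sink algorithm, which picks a spanning tree and composes inclusion-induced maps along one path, actually computes $\dim H_k^{G_{u,v}}$ for the sub-lattice rather than just for that single chosen path. This is guaranteed by commutativity of the induced diagram on $H_k$ coming from the multifiltration: every monotone directed path from $u$ to $v$ inside $G_{u,v}$ induces the same map on $H_k$, and Proposition~\ref{prop:comparison}(3) identifies the image of that common map with $\rho_{X,k}(u,v)$. So the path-based computation inside Theorem~\ref{thm:SSSS} already returns the correct rank invariant for each pair, and no further work is needed beyond combining the two running-time bounds above.
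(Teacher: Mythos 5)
Your proof is correct but takes a genuinely different route from the paper's. The paper's proof keeps the crude bound $l = O(n)$ from Theorem~\ref{thm:SSSS} and instead observes that, in a lattice, one does not need to root a tree at every vertex: it suffices to root trees at the $n^{1-1/d}$ vertices of the hyperplane $v_1 = 0$, choosing each tree so that it contains, for every pair $u \le w$ with $u_2 = v_2, \dots, u_d = v_d$, a tree path from the root through $u$ to $w$. That cuts the ``repeat for each source'' factor from $n$ to $n^{1-1/d}$ and turns $O(n^4)$ into $O(n^{4-1/d})$. You instead keep all $n$ starting vertices but substitute the sharp lattice value $l = \Theta(n^{1/d})$ into the $O(v^2 l^2)$ bound, getting $O(n^{2+2/d})$, which you then verify lies inside $O(n^{4-1/d})$ for $d \ge 2$, patching the $d = 1$ case with the $O(n^3)$ standard persistence algorithm together with the observation that $\rho(i,j)$ counts barcode intervals containing $[i,j]$. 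Both arguments are sound, and in fact your bound $O(n^{2+2/d})$ is strictly sharper than the stated $O(n^{4-1/d})$ for $d \ge 2$; the two savings are orthogonal, so combining your $l$-bound with the paper's tree reduction would sharpen the exponent further to $2 + 1/d$. One small stylistic point: the paper does not need to split off $d = 1$ as a special case because the tree-count reduction already degenerates to a single tree of cost $O(n^3)$ there, whereas your route requires the separate appeal to~\cite{persistencealgorithm}; it is worth flagging that this is where the two arguments diverge in structure, not just in the exponent bookkeeping.
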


\begin{proof}
The improvement in the run-time for lattices is that we do not need to run the tree based algorithm for every vertex of the graph.  
Instead we can choose trees for each vertex of the form $v = (0,v_2,\ldots,v_d)$.  
These trees will follow a path from the origin to $v$ and then go through every vertex $u = (u_1,u_2,\ldots,u_d)$ with $v_i < u_i$ for $i\ge 2$.  
In addition to the edges in the path to $v$, there will be an edge from $u' \to u$ whenever $u'$ is the largest vertex in the lexiographic order with $v \le_G u' \le_G u$.

{\hfill
\centering\xymatrix@R=.75pc @C=.75pc{
	\bullet & \bullet & \bullet & \bullet & \bullet \\
	\bullet \ar[u] & \bullet \ar[u] & \bullet \ar[u] & \bullet \ar[u] & \bullet \ar[u] \\
	\bullet \ar[u] & \bullet \ar[u] & \bullet \ar[u] & \bullet \ar[u] & \bullet \ar[u] \\
	\bullet \ar[u]\ar[r] & \bullet \ar[u]\ar[r] & \bullet \ar[u]\ar[r] & \bullet \ar[u]\ar[r] & \bullet \ar[u] \\
    \bullet \ar[u] & \circ & \circ & \circ & \circ \\
    \bullet \ar[u] & \circ & \circ & \circ & \circ \\
}\hfill}

\vspace{.1in}
Every pair of vertices that can be connected in the lattice by a directed path can also be connected in one of these trees.  
There are $n^\frac{d-1}{d} = n^{1-1/d}$ such
trees, yielding the improved running time.
\end{proof}

\section{Indecomposible Submodules and Persistence Over General Subgraphs}
\label{sec:general}

For a general subgraph $G' \subset G$, calculating $\mathcal{PH}_k^{G'}(\mathcal{X}_G)$ is much more complicated.  
However, we can utilize algorithms from computational algebra to solve calculate DAG persistence for arbitrary subgraphs.
To do so, we first need to address decomposing the DAG persistence module into indecomposables.

\paragraph{Algorithm to decompose the persistence module}
Viewed in other contexts a DAG persistence module is a representation of a quiver with (commutative) relations or, equivalently, a representation of an associative algebra.
Our representations are finite-dimensional and have coefficients in a finite field.  
Under these conditions there are polynomial time algorithms to decompose the module into indecompossibles~\cite{chistov1997polynomial,brooksbank2008testing}.  These, and similar, algorthms have been implemented in the computer algebra systems GAP~\cite{GAP4} and MAGMA~\cite{MR1484478}.  For examples that are not too large, these algorithms work well in practice.  For larger examples that we attempted, GAP failed to return an answer.

\paragraph{Algorithm to calculate the rank of $\mathcal{H}_k^{G'}(\mathcal{X}_G)$ for general subgraphs}
Given a decomposition into indecomposable submodules, it is relatively straightforward to calculate the persistence groups fro arbitrary subgraphs.
\begin{algorithmic}
\State Let $P$ be the restriction of  $\mathcal{PH}_k(\mathcal{X}_G)$ to the subgraph $G'$.
\State Decompose $P$ into irreducibles $M_1,\ldots, M_n$.
\State Initialize $r=0$
\For{$i=1..n$}
    \If{$M_i$ is elementary} \State Increment $r$ \EndIf
\EndFor
\end{algorithmic}  
Given a representation of a submodule, we can check if it is elementary since at every vertex of the DAG the representation is either rank 0 or 1.  
Note that Lemma~\ref{lem:modulesinglesource} implies that each of the $M_i$ are either elementary and contribute exactly one to the rack of the persistence group or have $H_k^{G'}(M_i) = 0$ and contribute nothing to the persistence group.

\section{Applications}
\label{sec:applications}

We finally will  demonstrate two proof of concept applications of our definitions, both of which use
our single source single sink algorithm from Section~\ref{sec:single}.  Both examples
utilize the same dataset:  a piecewise analytic surface of genus two. Point samples were 
sampled uniformly from the surface.

\subsection{Estimating Persistence Using Multiple Subsamples}

The first application is estimating persistence for a point sample using
a pair of much smaller subsamples.  Let $X_0 \to X_1 \to \cdots \to X_n$
and $Y_0 \to Y_1 \to \cdots \to Y_n$ be filtrations for the union of balls
of various radii for two subsamples of a common point set.  Moreover, we assume that each of the $X_i$
is contained in some $Y_j$ and vice-versa.  This yields a directed
graph of the form shown in Figure~\ref{fig:parallel_graph}.
This is the picture of an $\epsilon$-interleaving~\cite{cohen2008proximity}.
It is know that the persistence diagrams of these two filtrations have distance at most $\epsilon$~\cite{chazal2016structure}.
Using DAG persistence we try to exploit additional information encoded in the graph to distinguish the significant cycles from noise.

If we could calculate the carrier graphs for each irreducible submodule for such a dataset, then we could build a ``persistence diagram''. 
The birth times would be the average index of the first $X_i$ and $Y_i$ that appear in the carrier graph and the death time would be the average of the largest index.  
In theory, we could use the algorithms in Section~\ref{sec:general}, but they were too large for GAP to process.
Instead, we used an alternative heurstic process:

\begin{enumerate}
	\item Fix a small constant $s > 0$.
	\item Build filtrations $\{ X_i \}$ and $\{ Y_i \}$ where 
    	$X_i = \{ x \ | \ d(x,X_0) \le i s \}$ and $Y_i$ is defined similarly.
	\item Estimate if $X_i \subset Y_j$ and $Y_i \subset X_j$.
    	\begin{enumerate}	
        	\item Decompose the region into a rectangular grid with spacing $s$.
        	\item Each grid point is labeled by its distance to $X_0$ and $Y_0$
            	which were rounded to their closest multiple of $s$.
            \item We will assume that $X_{\alpha s} \subset Y_{\beta s}$ if
            	for every grid points with distance at most $\alpha s$ to $X_a$
                has distance to $Y_0$ at most $\beta s$.
        \end{enumerate}
    \item Calculate the rank of the persistent homology using the single-source
    	single-sink algorithm of Section \ref{sec:single} and record their bases.
    \item Consider the set of bases and merge any two graphs that have one of the
    	same bases in common.
\end{enumerate}

This heuristic algorithm attempts to construct the carrier subgraphs of the irreducbile submodules.
This process is not guaranteed to identify the carrier subgraphs of the irreducible submodules.  
However, in our particular example, we were able to validate the results and show that it correctly identified the carrier subgraphs.
We find it interesting that the heuristic was successful in finding the irreducible submodules and wonder how likely this is to occur in practice.

In Figure~\ref{fig:parallel}, we show the result of this process.  
Each subfigure is a persistence diagram, where each cycle is considered as a pair of
birth and death times.  The persistence of each cycle is the difference in these times
which is equal to the distance to the diagonal.
The 
space considered is a genus two surface sampled with 5000 points, at which level the 
persistent features, which are the 4 generators of homology, are clearly seen as significant in Figure~\ref{fig:parallel}(a).
Note that these four points blur together in pairs in the figure and are difficult to distinguish from each other.  In 
the remaining pictures, we calculate persistent homology for a simple directed acyclic graph that 
consists of two different 200 point subsamples (Figures~\ref{fig:parallel}(b) and (c)), and their union into a larger 400 point sample (Figure~\ref{fig:parallel}(d)).  We note 
that individually, each sample's persistent homology is quite noisy and does not distinguish the four 
generators at all from the ``noise''.  However, the persistent homology for the directed acyclic graph (Figure ~\ref{fig:parallel}(e)) clearly separates the 4 main generators from the noise, at a far lower level of sampling 
than is possible with standard persistent homology.  The table (Figure~\ref{fig:parallel}(f)) shows the persistence values
for the four generating cycles for the surface and the next largest cycle.  For simplicity of interpretation, the input data was scaled so that the most significant cycle in the 5,000 point sample has lifespan equal to 100.

\begin{figure}
\centering
\begin{tabular}{c}
{\xymatrix@R=.75pc @C=.75pc{
  X_0 \ar[r]  \ar[ddrrr] & 
  X_1 \ar@{.>}[rr] \ar[ddrrrr] & &
  X_i \ar@{.>}[rr] & &
  X_j \ar@{.>}[rr] & &
  X_n
  \\ \\
  Y_0 \ar[r] \ar[uurrr] & 
  Y_1 \ar@{.>}[rr] \ar[uurrrr] & &
  Y_i \ar@{.>}[rr] & &
  Y_j \ar@{.>}[rr] & &
  Y_n  
}}
\end{tabular}
\caption{\label{fig:parallel_graph} Filtration of union of balls corresponding to each
of the two subsamples where all arrows indicate inclusion maps.}
\end{figure}

Intuitively, short lived cycles in one of the filtrations do not align well with cycles in the other filtration.  
So these cycles are paired with very short cycles in the other filtration.  
The corresponding points on the persistence diagram are moved toward the diagonal.  
This deemphasizes the cycles that are the result of sampling noise.

Additional possible applications of this are numerous. 
 For example, we could use directed acyclic graphs 
of various witness complexes and seek improved results using a DAG over a small set of witness 
complexes; recent work using zigzag persistence seems relevant in this setting~\cite{zigzagapplications}.  
Also, the example in Section~\ref{subsec:ex} demonstrates how homology classes from pieces of a
space can be ``aligned'' similar to the bootstrapping method of~\cite{zigzagapplications}.

It should be noted, that this is not a faster way to calculate the persistence of the 
larger sample.  This experiment was performed to test if the filtrations for two small subsamples
could be combined to more accurately represent the topology of the shape than the union of the
subsamples.  In this case, we were able to answer in the affirmative, but this would not work on all such examples and would be computationally infeasible in practice.

\begin{figure}
\begin{center}
\hspace*{-.4in}
\begin{tabular}{ccc}
(a) Full (5000 points) & (b) First half (200 points) & (c) Second half (200 points) \\
$\vcenter{\hbox{\includegraphics[width=2.2in]{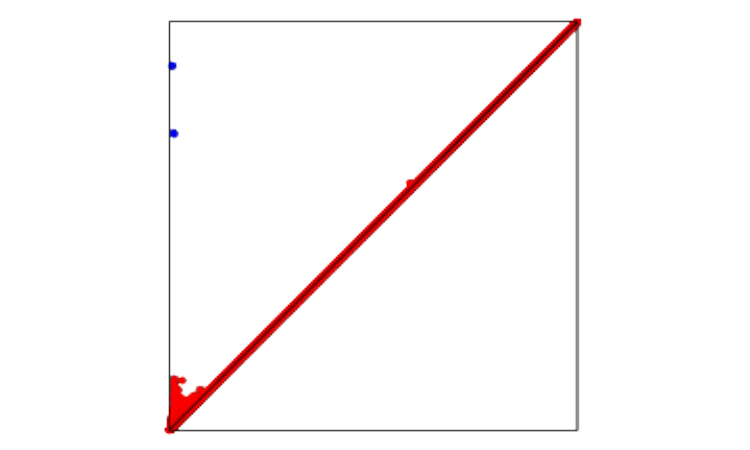}}}$ &
$\vcenter{\hbox{\includegraphics[width=2.2in]{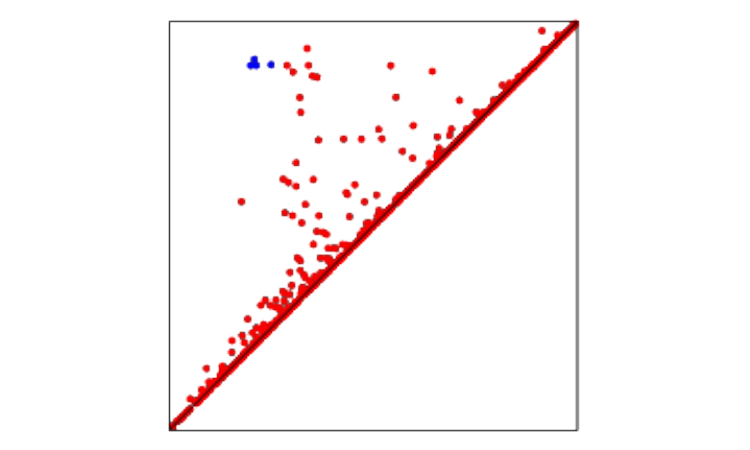}}}$ &
$\vcenter{\hbox{\includegraphics[width=2.2in]{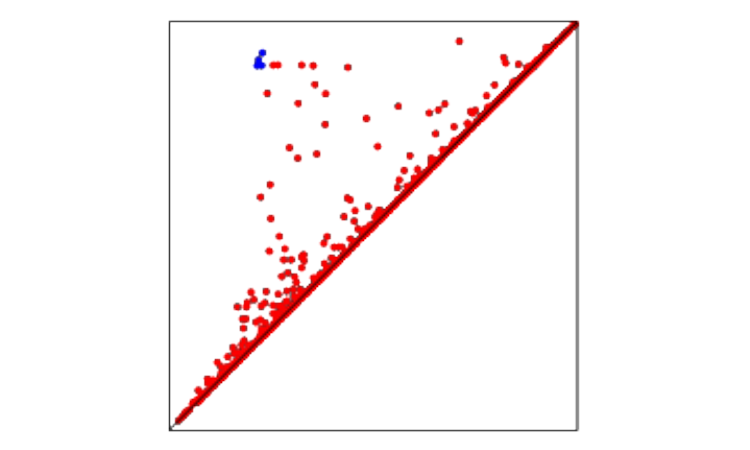}}}$ \\
\vspace*{.05in} & & \\
(d) Merged (400 points) & (e) DAG persistence & (f) Persistence values \\
$\vcenter{\hbox{\includegraphics[width=2.2in]{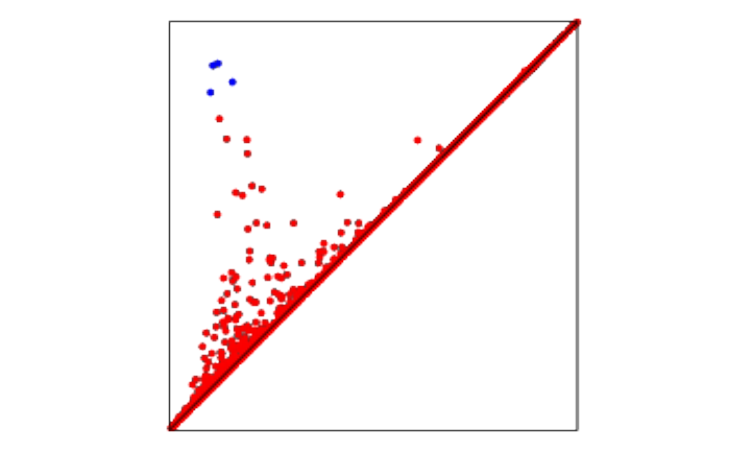}}}$ &
$\vcenter{\hbox{\includegraphics[width=2.2in]{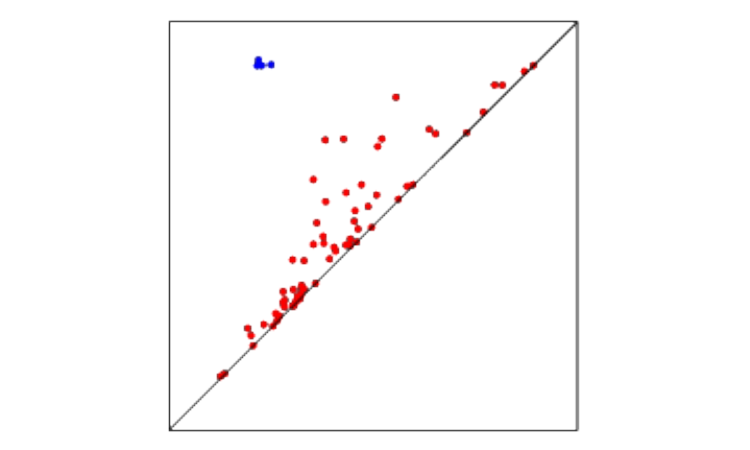}}}$ &
{\small
\begin{tabular}{l|c|c}
& Top 4 & Fifth \\ \hline
Full & 80.4--100.0 & 12.8\\
Subsample & 78.8--88.7 & 72.1 \\
1st half & 72.8--78.9 & 68.2 \\
2nd half & 75.2--78.5 & 72.1 \\
Parallel & 72.8--77.6 & 37.1 \\
\end{tabular}}
\end{tabular}
\end{center}
\caption{Persistent homology for a point sample of a double torus.
(a) Persistence diagram for the original 5,000 point sample,
(b) and (c) Persistence diagrams for two different 200 point subsamples,
(d) Persistence diagram for the union of the two small samples,
(e) DAG persistence diagram
and (f) the distance from the diagonal to the five most significant features.  In all of the persistence diagrams, the four most significant cycles are colored blue to highlight them.}
\label{fig:parallel}
\end{figure}

\begin{figure}
\centering
\begin{tabular}{c}
{\small\xymatrix@R=.75pc @C=.75pc{
& & & & \\
& & & & \\
& & X_2 \cup Y_2 \ar@{.>}[uu] & \\
X_2 \ar[urr]\ar@{.>}[uuu] & & & Y_2 \ar[ul]\ar@{.>}[uuu] \\
& X_2 \cap Y_2 \ar[ul]\ar[urr] \ar@{.>}[uuuu] & & \\
& & & & \\
& & X_1\cup Y_1 \ar[uuuu] & \\
X_1 \ar[urr]\ar[uuuu] & & & Y_1 \ar[ul]\ar[uuuu] \\
& X_1 \cap Y_1 \ar[ul]\ar[urr]\ar[uuuu] & & \\
& & & & \\
& & X_0\cup Y_0 \ar[uuuu] & \\
X_0 \ar[urr]\ar[uuuu] & & & Y_0 \ar[ul]\ar[uuuu] \\
& X_0 \cap Y_0 \ar[ul]\ar[urr]\ar[uuuu] & & \\
}}
\end{tabular}
\caption{\label{fig:comparison}Underlying graph for the comparison of two filtrations.}
\end{figure}

\subsection{Shape Comparison}
\label{subsec:comparison}

Given filtrations of two overlapping shapes, a natural question
is to measure how similar they are.  DAG persistence can provide
a method for comparison.  Consider the graph in Figure~\ref{fig:comparison};
it includes filtrations for two shapes $X$ and $Y$ as well as their
intersection and union.  If $X$ and $Y$ are very similar and well
aligned then this should be detected in the carrier subgraphs.

This comparison is made by building persistence diagrams for
each filtration $\{ X_i \}$ and $\{ Y_i \}$ using standard persistence.
A persistence diagram is built for the module over $G$ by first 
calculating its irreducible modules and their carrier subgraphs.  In this example,
all of the irreducible subgraphs were elementary making further analysis possible.  
Each carrier subgraph is converted to
a point $(i,j)$ in the persistence diagram if $i$ is the smallest
index and $j$ is the largest index of the vertices in the carrier subgraph.

Similar to the previous application, it was not possible to perform this calculation exactly as we
have no efficient algorithm to find the irreducible submodules.
Instead we used a very similar heuristic process to the one in the previous example, except
we had four filtrations to compare instead of two.

This comparison was performed on  two 1000 point subsamples from 
the dataset in the previous example, see Figure~\ref{fig:parallel}.  The bottleneck distance 
between the persistence diagrams of each of the two samples
and the persistence diagram for the comparison graph were both under
5\% of the lifespan of the 4 significant topology features of the shapes.
This demonstrates that the two shapes being compared have nearly
identical topological features except on a very small scale.

An alternate way of comparing the filtration $\{X_i\}$ and $\{Y_i\}$ is to find the interleaving distance between them~\cite{cohen2008proximity}.  This is the minimum $\epsilon$ so that for all $i$, $X_i \subset Y_{i+\epsilon}$ and $Y_i \subset X_{i+\epsilon}$.  In general, calculating interleaving distance is very hard to do.  The DAG persistence for this graph does not approximate interleaving distance, but it has the potential to be used for some of the same purposes.
This is a direction for further study.

\section{Future Work}
\label{sec:future}

Our algorithm in Section \ref{sec:general} is very slow due to the high degree both theoretically and in practice.  
However, the algorithm to find the decomposition of the persistence module is designed for modules over very general algebras.
Significant improvements should be possible that utilize the fact that commutative $G$-modules has additional structure.
The heuristics used in Section~\ref{sec:applications} used this additional structure, but we have been unable to prove that they will always be successful in correctly decomposing the persistence module.

Recent work has focused on developing parallel frameworks for persistence~\cite{DBLP:journals/corr/BauerKR13}.  
A second practical application of our formulation would be using the type of splitting shown is 
Section~\ref{subsec:ex} as a basis for a divide and conquer algorithm that allows parallel 
computation of standard persistence.  For example, in Figure~~\ref{fig:example}, we can calculate 
the persistence of the entire space by combining the computation for the subspaces $X,Y,$ and $Z$, which can be done in parallel.

There is the potential to improve our algorithm for all single-source single- sink subgraphs
using tools from recent work that computes zigzag homology in matrix multiply 
time~\cite{zigzag-mm}.  Consider a directed tree $T$ in the DAG for a single 
sink $s$; this tree is composed of paths which look like filtrations used in zigzag persistence, so we can 
use the recent zigzag algorithm for each path.  If this tree has common subpaths from previous 
calls to the zigzag algorithm, we could potentially speed up our algorithm by using this 
information.  On the other hand, if the tree has no such common subpaths, then 
intuitively we should be able to balance the fact that the disjoint 
paths sum to $n$ (so that we either have many very short paths or the tree 
is a single path).   Balancing this recursion based on the analysis in~\cite{zigzag-mm}, however,  
gives no better running time than the naive $O(n^2 M(l))$ algorithm we briefly outlined in 
Section~\ref{sec:single}, 
since the time to combine the information from a previous call with the new recursive call 
will take longer than the call itself.  A better algorithm that uses this information 
successfully is an interesting direction to consider.

Finaly, the alignment process proposed in Section~\ref{subsec:comparison} should be studied further.
It could prove a faster alternative to finding interleavings of filtrations.

\section*{Acknowledgments}

The authors would like to thank Afra Zomorodian for suggesting this problem during a visit, 
as well as for additional comments and suggestions along the way.  
We would also like to thank Greg Marks and Michael May for helpful conversations 
during the course of this work.
Finally, we would like to thank the anonymous reviewers for a number of suggestions for improvement of this paper.


\bibliography{dag-persistence}

\end{document}